\documentclass[twoside,onecolumn]{article}%
\usepackage{amssymb}
\usepackage{amsfonts}
\usepackage{amsmath}
\usepackage{graphicx}%
\setcounter{MaxMatrixCols}{30}
\providecommand{\U}[1]{\protect\rule{.1in}{.1in}}
\topmargin -0.40in \oddsidemargin 0.08in \evensidemargin 0.08in
\marginparwidth 0.00in \marginparsep 0.00in \textwidth 15cm
\textheight 23cm \pagestyle{myheadings} \markboth{\rm\small Manwai Yuen}
{\rm\small Solutions with Elliptic Symmetry for Euler, Navier-Stokes Eqs}
\arraycolsep=1.5pt
\linespread{1.5}

\newtheorem{theorem}{Theorem}

\newtheorem{lemma}[theorem]{Lemma}

\newtheorem{remark}[theorem]{Remark}

\newenvironment{proof}[1][Proof]{\noindent\textbf{#1.} }{\ \rule{0.5em}{0.5em}}
\begin{document}

\title{\textbf{Self-Similar Solutions with Elliptic Symmetry for the Compressible
Euler and Navier-Stokes Equations in }$R^{N}$}
\author{M\textsc{anwai Yuen\thanks{E-mail address: nevetsyuen@hotmail.com }}\\\textit{Department of Applied Mathematics,}\\\textit{The Hong Kong Polytechnic University,}\\\textit{Hung Hom, Kowloon, Hong Kong}}
\date{Revised 19-Apr-2011v2}
\maketitle

\begin{abstract}
Based on Makino's solutions with radially symmetry, we extend the
corresponding ones with elliptic symmetry for the compressible Euler and
Navier-Stokes equations in $R^{N}$ ($N\geq2$). By the separation method, we
reduce the Euler and Navier-Stokes equations into $1+N$ differential
functional equations. In detail, the velocity is constructed by the novel
Emden dynamical system:%
\begin{equation}
\left\{
\begin{array}
[c]{c}%
\ddot{a}_{i}(t)=\frac{\xi}{a_{i}(t)\left(  \underset{k=1}{\overset{N}{\Pi}%
}a_{k}(t)\right)  ^{\gamma-1}}\text{, for }i=1,2,....,N\\
a_{i}(0)=a_{i0}>0,\text{ }\dot{a}_{i}(0)=a_{i1}%
\end{array}
\right.
\end{equation}
with arbitrary constants $\xi$, $a_{i0}$ and $a_{i1}$. Some blowup phenomena
or global existences of the solutions obtained could be shown.

MSC2010: 35B40, 35Q31, 35Q30, 37C10, 37C75 ,76N10

Key Words: Euler Equations, Navier-Stokes Equations, Analytical Solutions,
Elliptic Symmetry, Makino's Solutions, Self-Similar, Drift Phenomena, Emden
Equation, Blowup, Global Solutions

\end{abstract}

\section{Introduction}

The compressible Euler or Navier-Stokes equations are written as the follows:%
\begin{equation}
\left\{
\begin{array}
[c]{rl}%
{\normalsize \rho}_{t}{\normalsize +\nabla\cdot(\rho\vec{u})} &
{\normalsize =}{\normalsize 0}\\
\rho\left[  \vec{u}_{t}+\left(  \vec{u}\cdot\nabla\right)  \vec{u}\right]
+K\rho^{\gamma} & {\normalsize =}\mu\Delta\vec{u}%
\end{array}
\right.  \label{eq1cnsnsyy2011}%
\end{equation}
where the density $\rho=\rho(t,\vec{x})$ and velocity $\vec{u}=\vec{u}%
(t,\vec{x})=(u_{1},u_{2},....,u_{N})\in R^{N}$ with $\vec{x}=(x_{1}%
,x_{2},...,x_{N})\in R^{N}$. And $K>0$, $\gamma\geq1$ and $\mu\geq0$ are
constants. If $\mu=0$, the system (\ref{eq1cnsnsyy2011}) is the Euler
equations; if $\mu>0$, the system (\ref{eq1cnsnsyy2011}) is the Navier-Stokes equations.

The Euler and Navier-Stokes equations (\ref{eq1cnsnsyy2011}) are the very
fundamental models in fluid mechanics \cite{Lions} and \cite{CW}. Searching
particular solutions for the systems are the important part in mathematical
physics for understanding their nonlinear phenomena. By the separation method,
Makino firstly obtained the radial symmetry solutions for the Euler or
Navier-Stokes equations (\ref{eq1cnsnsyy2011}) in 1993
\cite{Makino93exactsolutions}. After that there are some other ways to
construct some particular solutions \cite{LW} and \cite{Yuen1dEuler} for these systems.

It is natural to seek solutions with elliptic symmetry for the Euler or
Navier-Stokes equations (\ref{eq1cnsnsyy2011}), based on the previous work. In
this brief article, we could generalize Makino's solutions to the
corresponding ones with elliptical symmetry and drift phenomena for these
systems in the following theorem:

\begin{theorem}
\label{thm:1cnsnsyy2011elliptic}To the Euler and Navier-Stokes equations
(\ref{eq1cnsnsyy2011}) in $R^{N}$, there exists a family of solutions:%
\begin{equation}
\left\{
\begin{array}
[c]{c}%
\rho=\frac{f(s)}{\underset{k=1}{\overset{N}{\Pi}}a_{k}}\\
u_{i}=\frac{\dot{a}_{i}}{a_{i}}\left(  x_{i}+d_{i}\right)  \text{ for
}i=1,2,....,N
\end{array}
\right.  \label{Yuensolcnsnsyy2011}%
\end{equation}
where%
\begin{equation}
f(s)=\left\{
\begin{array}
[c]{c}%
\alpha e^{-\frac{\xi}{2K}s}\text{
\ \ \ \ \ \ \ \ \ \ \ \ \ \ \ \ \ \ \ \ \ \ \ \ \ \ \ \ \ for }\gamma=1\\
\max\left(  \left(  -\frac{\xi(\gamma-1)}{2K\gamma}s+\alpha\right)  ^{\frac
{1}{\gamma-1}},\text{ }0\right)  \text{ for }\gamma>1
\end{array}
\right.
\end{equation}
with $s=\underset{k=1}{\overset{N}{\sum}}\frac{(x_{k}+d_{k})^{2}}{a_{k}%
(t)^{2}}$, arbitrary constants $\alpha\geq0,$ $d_{k}$ and $\xi$;\newline and
the auxiliary functions $a_{i}=a_{i}(t)$ satisfy the Emden dynamical system:
\begin{equation}
\left\{
\begin{array}
[c]{c}%
\ddot{a}_{i}=\frac{\xi}{a_{i}\left(  \underset{k=1}{\overset{N}{\Pi}}%
a_{k}\right)  ^{\gamma-1}}\text{, for }i=1,2,....,N\\
a_{i}(0)=a_{i0}>0,\text{ }\dot{a}_{i}(0)=a_{i1}%
\end{array}
\right.  \label{Emdengeneral1}%
\end{equation}
with arbitrary constants $a_{i0}$ and $a_{i1}.$\newline In particular, with
$\gamma=1,$\newline(1a) for $\xi<0$, the solutions (\ref{Yuensolcnsnsyy2011})
blow up on a finite time;\newline(1b) for $\xi>0,$ the solutions
(\ref{Yuensolcnsnsyy2011}) exists globally.\newline with $\gamma>1,$%
\newline(2a) for $\xi<0$ and some $a_{i1}<0$, the solutions
(\ref{Yuensolcnsnsyy2011}) blow up on or before the finite time
\begin{equation}
T=\min(-a_{i0}/a_{i1}:a_{1i}<0,i=1,2,...,N);
\end{equation}
(2b) for $\xi>0$ and $a_{i1}\geq0$ the solutions (\ref{Yuensolcnsnsyy2011})
exist globally.
\end{theorem}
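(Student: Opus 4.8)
The plan is to use the separation ansatz directly: substitute $(\ref{Yuensolcnsnsyy2011})$ into $(\ref{eq1cnsnsyy2011})$, show the PDEs collapse to the Emden system $(\ref{Emdengeneral1})$ together with an ODE for $f$, and then read off the long-time behaviour from the resulting (scalar or coupled) Emden system.

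First I would pass to the shifted coordinates $y_i=x_i+d_i$, so that $s=\sum_k y_k^2/a_k(t)^2$ and $u_i=(\dot a_i/a_i)\,y_i$. Since each $u_i$ is affine in $\vec x$, $\Delta u_i=0$; hence the viscous term is killed and the Euler ($\mu=0$) and Navier--Stokes ($\mu>0$) cases are handled simultaneously. For the continuity equation I would compute $\rho_t$ from $\partial_t(\prod_k a_k)^{-1}=-(\prod_k a_k)^{-1}\sum_k\dot a_k/a_k$ and $s_t=-2\sum_k(\dot a_k/a_k)(y_k^2/a_k^2)$, and compute $\nabla\cdot(\rho\vec u)=\sum_i\partial_{y_i}(\rho u_i)$ using $s_{y_i}=2y_i/a_i^2$; the $f'$-terms and the $f$-terms then cancel in pairs, so $\rho_t+\nabla\cdot(\rho\vec u)\equiv0$ for every choice of $a_i(\cdot)$ and every $f$. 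For the $i$-th momentum equation I would use the identity $\partial_t u_i+(\vec u\cdot\nabla)u_i=(\ddot a_i/a_i)\,y_i$ (the $\dot a_i^2/a_i^2$ terms cancel) together with $\partial_{y_i}(\rho^\gamma)=2\gamma f^{\gamma-1}f'(s)\,y_i/\big(a_i^2(\prod_k a_k)^\gamma\big)$; dividing the resulting identity by $y_i f/(a_i\prod_k a_k)$ reduces the momentum balance to $\ddot a_i=-2K\gamma f^{\gamma-2}f'(s)/\big(a_i(\prod_k a_k)^{\gamma-1}\big)$. Thus it is enough to choose $f$ with $2K\gamma f^{\gamma-2}f'\equiv-\xi$, i.e.\ $(f^{\gamma-1})'=-\xi(\gamma-1)/(2K\gamma)$ when $\gamma>1$ and $(\ln f)'=-\xi/(2K)$ when $\gamma=1$; integrating gives exactly the stated $f(s)$, and what remains of the momentum system is precisely $(\ref{Emdengeneral1})$. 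On the vacuum set $\{f(s)=0\}$ (possible only for $\gamma>1$) both $\rho$ and $\rho^\gamma$ vanish on a neighbourhood while $\rho^\gamma=(f^{\gamma-1})^{\gamma/(\gamma-1)}$ remains $C^1$ across the interface, so the momentum balance continues to hold there; this is the one point that needs a remark rather than a computation.

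It then remains to analyse $(\ref{Emdengeneral1})$. For $\gamma=1$ the system decouples into $N$ copies of $\ddot a_i=\xi/a_i$, with first integral $\dot a_i^2=a_{i1}^2+2\xi\ln(a_i/a_{i0})$. If $\xi>0$ this bounds $a_i$ below by $a_{i0}e^{-a_{i1}^2/(2\xi)}>0$ and, since $\int^{\infty}da/\sqrt{\text{const}+2\xi\ln a}=\infty$, rules out finite-time escape to $+\infty$, so the solution is global, which is (1b). If $\xi<0$ then $\ddot a_i<0$, so $a_i$ is concave; the first integral forces $a_i$ to be bounded above, hence eventually strictly decreasing, hence vanishing at a finite time, and then $1/\prod_k a_k\to\infty$ makes $\rho$ (e.g.\ at the centre $\vec x=-\vec d$, where $s=0$ and $\rho=\alpha/\prod_k a_k$) blow up, which is (1a). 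For $\gamma>1$ the system is coupled through $\prod_k a_k$; this is the main obstacle, but crude comparison suffices. In (2a), $\xi<0$ gives $\ddot a_i<0$ as long as all $a_k>0$, so any index with $a_{i1}<0$ satisfies $\dot a_i(t)\le a_{i1}<0$ and thus $a_i(t)\le a_{i0}+a_{i1}t$; hence such an $a_i$ must reach $0$ by time $-a_{i0}/a_{i1}$, the minimum of these bounds is the stated $T$, and again $\prod_k a_k\to0$ forces $\rho$ to blow up at or before $T$. In (2b), $\xi>0$ and $a_{i1}\ge0$ give $\ddot a_i>0$, so each $a_i$ is nondecreasing with $a_i(t)\ge a_{i0}>0$; then $\prod_k a_k\ge\prod_k a_{k0}>0$ and $a_i\ge a_{i0}$ yield $\ddot a_i\le\xi/\big(a_{i0}(\prod_k a_{k0})^{\gamma-1}\big)$, a fixed constant, so each $a_i$ grows at most quadratically; local existence and the continuation criterion for the smooth system $(\ref{Emdengeneral1})$ (valid while all $a_k>0$) then give global existence.

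The genuinely delicate steps are (i) verifying that the free-boundary ("$\max$") form of $f$ is consistent with the momentum equation — resolved by the $C^1$-regularity of $\rho^\gamma$ at the vacuum interface — and (ii) the qualitative study of the coupled Emden system when $\gamma>1$; the remainder is bookkeeping inside the substitution.
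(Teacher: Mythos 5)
Your proposal is correct and follows essentially the same route as the paper: direct substitution of the ansatz, reduction of the momentum balance to the Emden system together with the first-order ODE $2K\gamma f^{\gamma-2}f'=-\xi$, and elementary energy/comparison arguments for the blowup and global-existence claims. You in fact supply details the paper leaves implicit (the $C^1$ matching of $\rho^{\gamma}$ at the vacuum interface and the first integral $\dot a_i^2=a_{i1}^2+2\xi\ln(a_i/a_{i0})$ for $\gamma=1$), but the underlying argument is the same.
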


\begin{remark}
When $a_{1}=a_{2}=....=a_{N}=a(t)$, the solutions are with radial symmetry and
the Emden dynamical system (\ref{Emdengeneral1}) returns to the conventional
Emden equation:%
\begin{equation}
\left\{
\begin{array}
[c]{c}%
\ddot{a}(t)=\frac{\xi}{a(t)^{N(\gamma-1)+1}}\\
a(0)=a_{0}>0,\text{ }\dot{a}(0)=a_{1}.
\end{array}
\right.  \label{Emdeneqeq1}%
\end{equation}
This class of analytical solutions (\ref{Yuensolcnsnsyy2011}) with radial
symmetry for the compressible Euler equations (\ref{eq1cnsnsyy2011}) was first
discovered by Makino in \cite{Makino93exactsolutions}. Otherwise, the
solutions (\ref{Yuensolcnsnsyy2011}) are with elliptical symmetry for $N\geq2$.
\end{remark}

\section{The Separation Method}

Very recently, Yeung and Yuen in \cite{YYCNSNS2011} discovered the implicit or
explicit functions for the mass equations (\ref{Yuensolcnsnsyy2011})$_{1}$. In
this section, we apply their result in the explicit expression to have the
following lemma:

\begin{lemma}
[Lemma 1 in \cite{YYCNSNS2011}]\label{lem:generalsolutionformasseq}For the
equation of conservation of mass:
\begin{equation}
\rho_{t}+\nabla\cdot\left(  \rho\vec{u}\right)  =0,
\label{massequationsphericalcnsnsyy2011}%
\end{equation}
there exist solutions,%
\begin{equation}
\left\{
\begin{array}
[c]{c}%
\rho=\frac{f\left(  \frac{x_{1}+d_{1}}{a_{1}(t)},\frac{x_{2}+d_{2}}{a_{2}%
(t)},....,\frac{x_{N}+d_{N}}{a_{N}(t)}\right)  }{\underset{i=1}{\overset
{N}{\Pi}}a_{i}(t)}\\
u_{i}=\frac{\dot{a}_{i}(t)}{a_{i}(t)}\left(  x_{i}+d_{i}\right)  \text{ for
}i=1,2,....,N
\end{array}
\right.  \label{generalsolutionformassequationcnsnsyy2011}%
\end{equation}
with an arbitrary $C^{1}$ function $f\geq0$ and $a_{i}(t)>0$ and constants
$d_{i}$.
\end{lemma}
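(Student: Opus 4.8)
The statement is an existence-by-construction claim, so the plan is simply to substitute the proposed pair $(\rho,\vec u)$ into the continuity equation $\rho_t+\nabla\cdot(\rho\vec u)=0$ and verify that it holds identically for every admissible choice of the $C^1$ profile $f$ and the positive functions $a_i(t)$. To organize the computation I would introduce the self-similar variables $y_i := (x_i+d_i)/a_i(t)$, so that $\rho = f(y_1,\dots,y_N)\big/\prod_{k=1}^N a_k(t)$, and record the two elementary partial derivatives that drive everything: $\partial_t y_i = -y_i\,\dot a_i/a_i$ and $\partial_{x_j} y_i = \delta_{ij}/a_i$.

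First I would compute the time derivative. By the chain rule, $\rho_t$ splits into a contribution from the explicit prefactor, for which $\partial_t\big(\prod_k a_k\big)^{-1} = -\big(\prod_k a_k\big)^{-1}\sum_k \dot a_k/a_k$ by logarithmic differentiation, and a contribution from the $y_i$-dependence of $f$, giving $\sum_i f_{y_i}\,\partial_t y_i = -\sum_i y_i f_{y_i}\,\dot a_i/a_i$. Second I would compute the flux divergence. Since the prefactor is independent of $\vec x$ and $u_i=(\dot a_i/a_i)(x_i+d_i)=(\dot a_i/a_i)\,a_i y_i$, the product rule yields $\partial_{x_i}(\rho u_i)=\frac{\dot a_i}{a_i}\frac{1}{\prod_k a_k}\big(y_i f_{y_i}+f\big)$, and summing over $i$ produces exactly the two sums $\sum_i (\dot a_i/a_i)\,y_i f_{y_i}$ and $f\sum_i \dot a_i/a_i$, each divided by $\prod_k a_k$.

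Adding the two expressions, I expect the $y_i f_{y_i}$ sums to cancel term by term and the two $f\sum_i \dot a_i/a_i$ contributions to cancel as well, leaving $\rho_t+\nabla\cdot(\rho\vec u)\equiv 0$, which completes the verification. There is no genuine analytic obstacle here; the only place care is needed is the bookkeeping that matches the $+f\sum \dot a_i/a_i$ generated by differentiating the linear factor $x_i+d_i$ against the $-f\sum \dot a_k/a_k$ generated by differentiating the Jacobian prefactor. Conceptually this matching is not a coincidence: the velocity field $u_i=(\dot a_i/a_i)(x_i+d_i)$ generates the scaling flow under which each $y_i$ is invariant and whose volume Jacobian is precisely $\prod_k a_k(t)$, so the ansatz is nothing but the transport of an arbitrary initial profile $f$ along characteristics with density weighted by the inverse Jacobian. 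If desired, this characteristic viewpoint furnishes an alternative, essentially computation-free proof.
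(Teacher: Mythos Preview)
Your verification is correct: the chain-rule computation with $y_i=(x_i+d_i)/a_i(t)$ produces exactly the cancellations you describe, and this suffices for the lemma as stated.

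The paper organizes the argument a little differently. Instead of substituting both $\rho$ and $\vec u$ and checking that everything cancels, it first inserts only the velocity $u_i=(\dot a_i/a_i)(x_i+d_i)$ into the continuity equation, obtaining a linear first-order PDE for the unknown $\rho$,
\[
\rho_t+\sum_{i=1}^N \frac{\dot a_i}{a_i}(x_i+d_i)\,\rho_{x_i}+\Big(\sum_{i=1}^N\frac{\dot a_i}{a_i}\Big)\rho=0,
\]
and then writes down its general (implicit) solution $F\!\left(\rho\prod_k a_k,\ \frac{x_1+d_1}{a_1},\dots,\frac{x_N+d_N}{a_N}\right)=0$ via characteristics, from which the explicit form $\rho=f(y)/\prod_k a_k$ is extracted. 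Your closing remark about the characteristic/Jacobian interpretation is precisely this viewpoint. The trade-off: your direct substitution is shorter and entirely self-contained, while the paper's route yields a bit more, namely that the displayed family exhausts (up to the implicit-versus-explicit distinction) all densities compatible with this particular velocity field.
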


For better understanding the lemma, the proof is also provided here.

\begin{proof}
We plug the functions
\begin{equation}
\left\{
\begin{array}
[c]{c}%
\rho=\rho(t,\vec{x})\\
u_{i}=\frac{\dot{a}_{i}(t)}{a_{i}(t)}\left(  x_{i}+d_{i}\right)  \text{ for
}i=1,2,....,N,
\end{array}
\right.
\end{equation}
into the mass equation (\ref{massequationsphericalcnsnsyy2011}):%
\begin{equation}
\rho_{t}+\nabla\rho\cdot\vec{u}+\rho\nabla\cdot\vec{u}=0
\end{equation}%
\begin{equation}
\frac{\partial}{\partial t}\rho+\underset{i=1}{\overset{N}{\sum}}%
\frac{\partial}{\partial x_{i}}\rho\frac{\dot{a}_{i}(t)}{a_{i}(t)}(x_{i}%
+d_{i})+\underset{i=1}{\overset{N}{\sum}}\frac{\rho\dot{a}_{i}(t)}{a_{i}%
(t)}=0. \label{yylinearpdecnsnsyy2011}%
\end{equation}
The general solutions for the semi-linear partial differential equation
(\ref{yylinearpdecnsnsyy2011}) are:%
\begin{equation}
F\left(  \underset{i=1}{\overset{N}{\Pi}}a_{i}(t)\rho,\frac{x_{1}+d_{1}}%
{a_{1}(t)},\frac{x_{2}+d_{2}}{a_{2}(t)},....,\frac{x_{N}+d_{N}}{a_{N}%
(t)}\right)  =0 \label{YYgeneralsolutioncnsnsyy2011}%
\end{equation}
with an arbitrary $C^{1}$ function $F$ such that $\rho\geq0$.\newline We take
the explicit one as
\begin{equation}
\rho=\frac{f\left(  \frac{x_{1}+d_{1}}{a_{1}(t)},\frac{x_{2}+d_{2}}{a_{2}%
(t)},....,\frac{x_{N}+d_{N}}{a_{N}(t)}\right)  }{\underset{i=1}{\overset
{N}{\Pi}}a_{i}(t)}.
\end{equation}
Therefore, the proof is completed.
\end{proof}

The following proof is the checking our constructed functions
(\ref{Yuensolcnsnsyy2011}) for the main result:

\begin{proof}
[Proof of Theorem \ref{thm:1cnsnsyy2011elliptic}]We observe that the functions
(\ref{Yuensolcnsnsyy2011}) satisfy the conditions in Lemma
\ref{lem:generalsolutionformasseq} for the mass equation (\ref{eq1cnsnsyy2011}%
)$_{1}$. Alternatively, readers can plug the functions
(\ref{Yuensolcnsnsyy2011}) to balance the mass equation by the directional
computation like in \cite{Makino93exactsolutions} and \cite{YUENJMP2008}.

For the $i$-th momentum equation (\ref{eq1cnsnsyy2011})$_{2}$ of the Euler and
Navier-Stokes equations, we define the self-similar variable with elliptic
symmetry (for not $a_{1}=a_{2}=....=a_{N}$) and drift phenomena (for not all
$d_{i}=0$):
\begin{equation}
s=\underset{k=1}{\overset{N}{\sum}}\frac{(x_{k}+d_{k})^{2}}{a_{k}(t)^{2}}%
\end{equation}
to have:%
\begin{align}
&  \rho\left[  \frac{\partial u_{i}}{\partial t}+\sum_{k=1}^{N}u_{k}%
\frac{\partial u_{i}}{\partial x_{k}}\right]  +K\frac{\partial}{\partial
x_{i}}\rho^{\gamma}+\mu\Delta u_{i}\\
&  =\rho\left[  \frac{\partial}{\partial t}\left(  \frac{\dot{a}_{i}}{a_{i}%
}(x_{i}+d_{i})\right)  +\left(  \frac{\dot{a}_{i}}{a_{i}}(x_{i}+d_{i})\right)
\frac{\partial}{\partial x_{i}}\left(  \frac{\dot{a}_{i}}{a_{i}}(x_{i}%
+d_{i})\right)  \right]  +K\gamma\rho^{\gamma-1}\frac{\partial}{\partial
x_{i}}\rho\\
&  =\rho\left\{  \left[  \left(  \frac{\ddot{a}_{i}}{a_{i}}-\frac{\left(
\dot{a}_{i}\right)  ^{2}}{\left(  a_{i}\right)  ^{2}}\right)  (x_{i}%
+d_{i})+\frac{\left(  \dot{a}_{i}\right)  ^{2}}{\left(  a_{i}\right)  ^{2}%
}(x_{i}+d_{i})\right]  +K\gamma\rho^{\gamma-2}\frac{\partial}{\partial x_{i}%
}\frac{f(s)}{\underset{k=1}{\overset{N}{\Pi}}a_{k}}\right\} \\
&  =\rho\left\{  \frac{\ddot{a}_{i}}{a_{i}}(x_{i}+d_{i})+2K\gamma
\frac{f(s)^{\gamma-2}}{\left(  \underset{k=1}{\overset{N}{\Pi}}a_{k}\right)
^{\gamma-2}}\frac{\dot{f}\left(  s\right)  }{\left(  \underset{k=1}%
{\overset{N}{\Pi}}a_{k}\right)  }\left(  \frac{x_{i}+d_{i}}{a_{i}^{2}}\right)
\right\} \\
&  =\frac{\left(  x_{i}+d_{i}\right)  \rho}{a_{i}^{2}}\left\{  \ddot{a}%
_{i}a_{i}+2K\gamma\frac{f(s)^{\gamma-2}\dot{f}\left(  s\right)  }{\left(
\underset{k=1}{\overset{N}{\Pi}}a_{k}\right)  ^{\gamma-1}}\right\} \\
&  =\frac{\left(  x_{i}+d_{i}\right)  \rho}{a_{i}^{2}\left(  \underset
{k=1}{\overset{N}{\Pi}}a_{k}\right)  ^{\gamma-1}}\left\{  \xi+2K\gamma
f(s)^{\gamma-2}\dot{f}\left(  s\right)  \right\}
\end{align}
with the $N$-dimensional Emden dynamical system
\begin{equation}
\left\{
\begin{array}
[c]{c}%
\ddot{a}_{i}(t)=\frac{\xi}{a_{i}(t)\left(  \underset{k=1}{\overset{N}{\Pi}%
}a_{k}(t)\right)  ^{\gamma-1}}\text{ for }i=1,2,...,N\\
a_{i}(0)=a_{i0}>0,\text{ }\dot{a}_{i}(0)=a_{i1}%
\end{array}
\right.  \label{EmdenEmden}%
\end{equation}
with arbitrary constants $\xi$, $a_{i0}$ and $a_{i1}.$\newline Here, the local
existence for the Emden dynamical system (\ref{EmdenEmden}) can be guaranteed
by the fixed point theorem. Then, we further require the first order ordinary
differential equation:%
\begin{equation}
\left\{
\begin{array}
[c]{c}%
\frac{\xi}{2K\gamma}+f(s)^{\gamma-2}\dot{f}\left(  s\right)  =0\\
f(0)=\alpha\geq0.
\end{array}
\right.  \label{firstODE}%
\end{equation}
The above equation (\ref{firstODE}) can be solved exactly by\newline%
\begin{equation}
f(s)=\left\{
\begin{array}
[c]{c}%
\alpha e^{-\frac{\xi}{2K}s}\text{ }%
\ \ \ \ \ \ \ \ \ \ \ \ \ \ \ \ \ \ \ \ \ \ \ \ \ \ \ \ \ \ \text{for }%
\gamma=1\\
\max\left(  \left(  -\frac{\xi(\gamma-1)}{2K\gamma}s+\alpha\right)  ^{\frac
{1}{\gamma-1}},\text{ }0\right)  \text{ for }\gamma>1.
\end{array}
\right.
\end{equation}
Thus, the functions (\ref{Yuensolcnsnsyy2011}) are the solutions for the Euler
and Navier-Stokes equations (\ref{eq1cnsnsyy2011}).\newline In particular,
with $\gamma=1,$ as the Emden dynamical system (\ref{EmdenEmden}) becomes to
be the conventional Emden equation:%
\begin{equation}
\left\{
\begin{array}
[c]{c}%
\ddot{a}_{i}(t)=\frac{\xi}{a_{i}(t)}\\
a_{i}(0)=a_{i0}>0,\text{ }\dot{a}_{i}(0)=a_{i1},
\end{array}
\right.
\end{equation}
we may use the energy method in classical mechanics (or readers may refer the
lemma 7\ in \cite{YJMAA2008}) to show that (1a) for $\xi<0,$ functions
$a_{i}(t)$ blow up on a finite time;\newline(1b) for $\xi>0$, the functions
$a_{i}(t)$ exist globally.\newline With $\gamma>1$,\newline(2a) for $\xi<0$
and some $a_{i1}<0$, by comparing the second order linear ordinary
differential equations:%
\begin{equation}
\left\{
\begin{array}
[c]{c}%
\ddot{a}_{i}\leq0\\
a_{i}(0)=a_{i0}>0,\text{ }\dot{a}_{i}(0)=a_{i1},
\end{array}
\right.
\end{equation}
we can show that the solutions (\ref{Yuensolcnsnsyy2011}) blow up on or before
the finite time
\begin{equation}
T=\min(-a_{i0}/a_{i1}:a_{i1}<0,\text{ }i=1,2,...,N);
\end{equation}
(2b) for $\xi>0$ and all $a_{i1}\geq0$, similarly, it is clear for that the
solutions (\ref{Yuensolcnsnsyy2011}) exist globally.\newline The proof is completed.
\end{proof}

\section{Conclusion and Discussion}

In this brief paper, the analytically self-similar solutions
(\ref{Yuensolcnsnsyy2011}) with elliptic symmetry and drift phenomena for the
compressible Euler and Navier-Stokes equations in $R^{N}$ ($N\geq2$) are
constructed by the separation method. We reduce the Euler and Navier-Stokes
equations (\ref{eq1cnsnsyy2011}) into the $1+N$ differential functional
equations:
\begin{equation}
(f(s),a_{i}(t)\text{ for }i=1,2,....N).
\end{equation}
In addition, by analyzing the Emden dynamical system (\ref{Emdengeneral1}):%
\begin{equation}
\left\{
\begin{array}
[c]{c}%
\ddot{a}_{i}(t)=\frac{\xi}{a_{i}(t)\left(  \underset{k=1}{\overset{N}{\Pi}%
}a_{k}(t)\right)  ^{\gamma-1}}\text{, for }i=1,2,....,N\\
a_{i}(0)=a_{i0}>0,\text{ }\dot{a}_{i}(0)=a_{i1},
\end{array}
\right.  \label{Emdengeneral 2}%
\end{equation}
some blowup or global properties of the constructed solutions
(\ref{Yuensolcnsnsyy2011}) can be shown easily.

However, for the $N$-dimensional $(N\geq2)$ Emden dynamical system
(\ref{Emdengeneral 2}) with arbitrary constants $\xi$, $\gamma$, $a_{i0}$ and
$a_{i1}$, the all blowup sets, blowup times and asymptotic analysis of the
solutions are not clear to be obtained. Computing simulation and rigorous
mathematical proofs for the system (\ref{Emdengeneral 2}) can be followed to
understand the evolution of the constructed flows (\ref{Yuensolcnsnsyy2011})
for the Euler and Navier-Stokes equations in the future.

\end{document}